\title{\LARGE \bf
Distributed Secret Securing in Discrete-Event Systems
}
\author{Shoma Matsui$^{1}$, Kai Cai$^{2}$, and Karen Rudie$^{1}$
\thanks{$^{1}$Department of Electrical and Computer Engineering and Ingenuity Labs Research Institute, Queen's University, Kingston, Canada. S. Matsui: \texttt{s.matsui@queensu.ca}, K. Rudie: \texttt{karen.rudie@queensu.ca}}%
\thanks{$^{2}$Department of Core Informatics, Osaka Metropolitan University, Osaka, Japan. \texttt{cai@omu.ac.jp}}%
}
\begin{document}
\maketitle
\thispagestyle{empty}
\pagestyle{empty}

\bstctlcite{BSTcontrol}

\begin{abstract}
In this paper, we study a security problem of protecting secrets in distributed systems. Specifically, we employ discrete-event systems to describe the structure and behaviour of distributed systems, in which global secret information is separated into pieces and stored in local component agents. The goal is to prevent such secrets from being exposed to intruders by imposing appropriate protection measures.
This problem is formulated as to ensure that at least one piece of every distributed global secret is secured by a required number of protections, while the overall cost to apply protections is minimum. We first characterize the solvability of this security problem by providing a necessary and sufficient condition, and then develop an algorithm to compute a solution based on the supervisory control theory of discrete-event systems. Finally, we illustrate the effectiveness of our solution with an example system comprising distributed databases.
\end{abstract}

\section{Introduction}\label{sec:intro}

As more and more information is stored on systems exposed to the Internet, it has been indispensable for system administrators to install and set up appropriate protection measures in their systems. While various practices and software applications have been developed to protect systems against intruders~\cite{Scarfone_2008}, it is still arduous even for skilled system administrators to properly set up such technologies especially when they have complicated systems and limited budgets. Along with evolving distributed systems and infrastructures that are typically too large and complex for system administrators to protect in an ad-hoc manner, formal methods have been attracting much attention from both research and industrial communities for system security~\cite{Kulik_2022}.

This paper employs discrete-event systems (DES)~\cite{Cassandras_2021} to model the system's structure and explain its behaviour. Our methodology also utilizes the central techniques from supervisory control theory (SCT)~\cite{Wonham_2019} of DES. Although DES and SCT were originally proposed as useful tools to address system control problems, subsequent works have attempted to apply DES and SCT to various security problems. For example, \emph{opacity} is one of the security notions in DES that focuses on anonymity and secrecy and has been extensively studied thus far~\cite{Bryans_2005,saboori2007notions,lin2011opacity,Balun_2022,Labed_2022,Han_2023}. In particular, current-state opacity (CSO) first introduced in~\cite{Bryans_2005} considers that the system of interest holds secret information modelled as \emph{states} of a DES that should not be exposed to adversarial entities. This paper adopts the same concept in a distributed setting.
There are also works that have examined how to enforce opacity through control \cite{dubreil2008opacity,Saboori_2012}, but our focus here is not on enacting control to ensure opacity but rather how to trade off hiding events to ensure some degree of opacity while at the same time leaving enough events not hidden so that they can be used to meet the protection requirements.
Other concepts of well-studied DES security are \emph{sensor deception}~\cite{Wakaiki_2018} and \emph{actuator enablement}~\cite{Carvalho_2016} attacks that consider more direct attacks against observation and control layers of DES than opacity and this paper as well as how to effectively defend systems against such attacks.

Security of non-monolithic systems has recently been attractive for the DES community. Such systems are often called decentralized or modular systems. For example, the decentralized version of language-based opacity was introduced in \cite{Paoli_2012}. Later on, the decentralized counterpart of current-state opacity was proposed by \cite{Wu_2018} and \cite{Tong_2018}. These works of decentralized settings differ from this paper, since they considered systems consisting of one agent. On the other hand, we consider distributed systems comprising more than one agent. Notably, the work of \cite{Tong_2019} which addressed so-called \emph{modular} opacity considered a similar distributed structure to this paper. One of the significant differences of this paper from \cite{Tong_2019} is that our methodology does not involve parallel composition whose computational complexity is exponential. That is, our formulated problem only needs to focus on each local agent rather than on a global observation.

Previous work \cite{Matsui_2019} introduced the notion of multiple protections to secure secrets in centralized DES. Specifically, we represent secrets as particular states in the system,
and from the security viewpoint, our system model has two types of events: \emph{protectable events} and \emph{unprotectable events}. Protectable events represent operations or software applications to which system administrators can apply protections. On the other hand, unprotectable events indicate the parts that system administrators cannot protect due to technical difficulty or resource deficiency. Along with the partition of protectable and unprotectable events, we also consider that each protectable event is associated with a ``cost'' level that represents the level of technical or financial burden for system administrators.
The goal was to find which transitions in the system should be ``protected'', namely a \emph{protection policy}, so that every trajectory from the initial state to the secret states has at least the required number of protected transitions. Protected transitions here represent restricted operations or access control on the real systems.

The protection policy also needs to only yield necessary transitions to protect, so that the overall cost to impose the policy on the system is minimum. In \cite{Matsui_2019}, a special case of this problem was formulated as \emph{Secret Securing with Two Protections and Minimum Costs Problem} where two protections are required for every secret (generalizable to $r~(\geq 2)$ protections). To find a solution of this problem, an algorithm that employs SCT was developed by converting the notion of protection to that of control. In other words, protecting transitions was interpreted as disabling them in the context of SCT.

The subsequent work of \cite{Matsui_2021} extended \cite{Matsui_2019} to heterogenous secret importance, that is, secrets in the system may have different levels of importance. Specifically in \cite{Matsui_2021}, we are given a security requirement that specifies the minimum cost level of protections for each group of secrets in addition to the number of protections. This assumption represents that secrets associated with a higher importance level should be protected by more costly protections, since the cost levels of protections also model the difficulty levels for intruders to breach in general. Another notion introduced in \cite{Matsui_2021} was so-called \emph{usability}, i.e., how much protecting transitions negatively affects legitimate users, since they are forced to pass through applied protections as well as intruders. A similar notion, called \emph{disruptiveness}, was also studied in \cite{Ma_2021} and \cite{Ma_2023}. The main difference is that in \cite{Matsui_2021}, negative effects on usability were captured by incrementing the cost levels of events, while for disruptiveness, minimal disruption can be achieved by assigning the same cost to all events and protecting as few events as possible.

In this paper, we further extend the earlier works of \cite{Matsui_2019} and \cite{Matsui_2021} from monolithic systems to distributed (i.e., multiagent) ones. Simply put, we consider a system modelled as a DES that consists of more than one agent, and each local agent holds ``pieces'' of global secret information that are distributed across agents. Given the structures of agents, the partition of protectable and unprotectable events, and the classes of cost levels, our goal in this paper is to find a protection policy such that intruders can never retrieve all pieces from local agents so that global secrets are protected, and the overall cost to impose the policy is minimum. This problem setting is motivated by distributed databases where secret information is divided into pieces and stored in multiple databases across several agents. We will illustrate a motivating example later in \cref{exmp:motivation}.

The remainder of this paper is organized as follows. In \cref{sec:dss}, we introduce our system model represented as a DES and formulate our goal of finding a protection policy that satisfies specific conditions. \cref{sec:synthesis} first presents a necessary and sufficient condition under which our problem is solvable, and then proposes an algorithm to construct a solution of our problem if one exists. In \cref{sec:example} we demonstrate the algorithm with a running example, and \cref{sec:conclusion} concludes this paper.


\section{Distributed Secret Securing}\label{sec:dss}

\subsection{System Models}\label{sec:dss:models}

We consider distributed systems consisting $n$ local agents. Each agent is modelled as an automaton denoted by a 4-tuple:
\begin{equation}\label{eq:agent}
  G^i = (Q^i, \Sigma^i, \delta^i, q_0^i)
\end{equation}
where $i \in [1, n]$ is the index of agents, $Q^i$ is the set of states, $\Sigma^i$ is the set of events, $\delta^i$ is the (partial) transition function, and $q_0^i \in Q^i$ is the initial state. The set of events $\Sigma^i$ is partitioned into two disjoint subsets: protectable event set $\Sigma_p^i$ and unprotectable event set $\Sigma_{up}^i$, namely $\Sigma^i = \Sigma_p^i \disjoint \Sigma_{up}^i$.

The set of all protectable events, given by $\Sigma_p = \bigcup_{i=1}^{n} \Sigma_p^i$, is further partitioned into $m$ disjoint subsets to represent the classes of cost levels, i.e., $\Sigma_p = \bigdisjoint_{j=1}^m \Sigma_{cl,j}$. Observe the difference of the composition of $\Sigma_p$, i.e., the union of $\Sigma_p^i$ and the disjoint union of $\Sigma_{cl,j}$. This indicates that agents could share protectable events while no protectable event belongs to more than one cost class.

\begin{rem}
  The cost level of each class is not comparable with other classes. In other words, the cost to protect one event in $\Sigma_{cl,2}$ is sufficiently higher than the cost to protect all events in $\Sigma_{cl,1}$.
\end{rem}

To discuss secret securing for systems with multiple agents defined above, we first adopt the notion of \emph{secure reachability} from previous work~\cite{Matsui_2019} with a slight adjustment.

\begin{defn}[$r$-secure Reachability]\label{defn:secure}
  Consider an agent $G^i$ in \cref{eq:agent}. Given a required protection $r$, a secret state $q_s \in Q_s^i$ is said to be \emph{securely reachable} with at least $r$ protectable events ($r$-securely reachable) with respect to $G$ and $\Sigma_{p,k} = \bigdisjoint_{i=1}^k \Sigma_{cl,i}$ if
  \[
    (\forall s \in \Sigma^*)\ \delta(q_0^i, s) = q_s \implies s \in \underbrace{\Sigma^* \Sigma_{p,k} \Sigma^* \dots \Sigma^* \Sigma_{p,k} \Sigma^*}_{\text{$\Sigma_{p,k}$ appears $r$ times}}
  \]
\end{defn}

In words, a secret state $q_s$ is securely reachable if the intruder has to break at least $r$ protections before they reach $q_s$ from the initial state $q_0^i$.


We suppose in this paper that secret information is distributed across all agents and each agent holds pieces of secrets. Specifically, each secret is represented as a tuple of secret states from all agents. For example, when we store a password ``abcd1234'' in a distributed system consisting of two database servers, one possible way is to split the password into two pieces, such as ``abcd'' and ``1234''. We call a tuple of secret states a \emph{global secret}, and denote the set of global secrets by $S$, namely the set of $n$-tuples of secret states from all agents:
\begin{equation}\label{eq:secret}
  S \subseteq Q^1_s \times Q^2_s \times \dots \times Q^n_s.
\end{equation}
We refer to the elements inside each $Q^i_s$ as \emph{local secret states}.
Note that different tuples in $S$ can share the same local secret states. For example, let $Q^1_s = \paren{q^1_a, q^1_b}$ and $Q^2_s = \paren{q^2_a, q^2_b}$. Since $Q^1_s \times Q^2_s = \paren{(q^1_a, q^2_a), (q^1_b, q^2_a), (q^1_a, q^2_b), (q^1_b, q^2_b)}$, $S$ may be $\paren{(q^1_a, q^2_a), (q^1_a, q^2_b)}$, i.e., a local secret state $q^1_a$ is shared by two tuples in $S$. This models our motivating example of distributed passwords. For instance, two distinct passwords ``abcdefgh'' and ``abcd1234'' shares the same piece of a local secret ``abcd''.

Besides the global secret, we consider that each secret tuple needs to be secured by a prescribed number of protections. For instance, it is natural that the more important a secret is, the more protections is needed. In other words, one would want to use fewer protections to reduce costs if a secret is not as important as others. We denote such a requirement of the number of protections for each secret in $S$ by
\begin{equation}\label{eq:requirement}
  R = \{(s, r) \mid s \in S, r \in \mathbb{N}\}
\end{equation}
where $\mathbb{N}$ is the set of all integers greater than $0$. We henceforth call $R$ in \cref{eq:requirement} the \emph{security requirement}. Also for convenience, we denote by $s^i$ the $i$\textsuperscript{th} component of $s$.

\subsection{Problem Formulation}\label{sec:dss:problem}

In this paper, we consider that intruders need to retrieve all pieces of a global secret in order to obtain this secret. Thus, the goal of system defenders is to have at least one state of every tuple in the global secret set secured by the corresponding number of protections, represented as $R$. We define the notion of a secured set of global secrets as follows.

\begin{defn}[Secured Set of Global Secrets]\label{defn:security}
  Consider an agent $G^i$ in \cref{eq:agent}, a set of global secrets $S$ in \cref{eq:secret}, and a security requirement $R$ in \cref{eq:requirement}. The set of global secrets $S$ is said to be \emph{secure} with respect to $R$ and $\Sigma_{p,k}$ if for all $(s, r) \in R$, there exists $k' \in [1, m]$ such that some $s^i$ is $r$-securely reachable with respect to $G^i$ and $\Sigma_{p,k'}$, and $k = \max_R k'$.
\end{defn}

By taking the maximum of $k'$ among all elements of $R$, the $k$ in \cref{defn:security} represents the maximum class of cost levels among events that should be protected to satisfy the security requirement $R$.

In addition to \cref{defn:secure}, we also employ from \cite{Matsui_2019} the notion of \emph{protection policy} that describes which transitions should be protected.

\begin{defn}[Protection Policy]\label{defn:policy}
  For an agent modelled as $G^i = (Q^i, \Sigma^i, \delta^i, q_0^i)$ and a nonempty subset of protectable events $\Sigma_p^i \subseteq \Sigma^i$, a \emph{protection policy} $\mathcal{P}^i: Q^i \to 2^{\Sigma_p^i}$ is a mapping function that assigns to each state a subset of protectable events.
\end{defn}

Using \cref{defn:secure}, \cref{defn:security} and \cref{defn:policy}, we now formulate the defender's goal of protecting the global secret against intruders with minimum costs.

\begin{prob}[Distributed Secret Securing Problem, DSSP]\label{prob:dssp}
  Consider agents modelled as in \cref{eq:agent}, a set of global secrets $S$ in \cref{eq:secret}, and a security requirement $R$ in \cref{eq:requirement}. Find protection policies $\mathcal{P}^i$ of $G^i$, $i \in [1,n]$, such that
  \begin{enumerate}[label={P\arabic*.},ref={P\arabic*}]
    \item \label{cond:dssp:secure} $S$ is secure with respect to $R$ and $\Sigma_{p,k}$; and
    \item \label{cond:dssp:minimum} $k$ is minimum.
  \end{enumerate}
\end{prob}

The conditions \ref{cond:dssp:secure} and \ref{cond:dssp:minimum} require the security requirement to be satisfied by minimum costs, that is, \ref{cond:dssp:secure} does not hold for $k-1$.

\begin{exmp}\label{exmp:motivation}
  \begin{figure}[htp]
    \centering
    \begin{subfigure}{\linewidth}
      \centering
      \begin{adjustbox}{scale=0.9}
        \begin{tikzpicture}[automata]
          \node[state,initial] (1) {$q_0^1$};
          \node[state,above right=1.5 and 1 of 1] (2) {$q_1^1$};
          \node[state,below right=1.5 and 1 of 2, secret] (3) {$q_2^1$};

          \draw (1) -- node{$\alpha_1$} (2);
          \draw (2) -- node{$\alpha_2$} (3);
          \draw (3) -- node{$\alpha_3$} (1);
        \end{tikzpicture}
      \end{adjustbox}
      \caption{$G^1 = (Q^1, \Sigma^1, \delta^1, q_0^1)$}
    \end{subfigure} \\ \bigskip
    \begin{subfigure}{\linewidth}
      \centering
      \begin{adjustbox}{scale=0.9}
        \begin{tikzpicture}[automata]
          \node[state] (1) {$q_0^2$};
          \node[state,above=of 1] (2) {$q_1^2$};
          \node[state,left=of 1,secret] (3) {$q_2^2$};
          \node[state,below=of 1] (4) {$q_3^2$};
          \node[state,right=of 1,secret] (5) {$q_4^2$};

          \draw (-6mm,-6mm) -- (1);
          \draw (1) -- node{$\beta_1$} (2);
          \draw (2) -- node[swap]{$\beta_2$} (3);
          \draw (3) -- node[swap]{$\beta_3$} (4);
          \draw (1) -- node{$\beta_4$} (4);
          \draw (4) -- node[swap]{$\beta_5$} (5);
          \draw (5) -- node[swap]{$\beta_6$} (2);
          \draw (3) -- node[above]{$\beta_7$} (1);
          \draw (5) -- node[above]{$\beta_7$} (1);
        \end{tikzpicture}
      \end{adjustbox}
      \caption{$G^2 = (Q^2, \Sigma^2, \delta^2, q_0^2)$}\label{fig:exmp:motivation:2}
    \end{subfigure}
    \caption{Example agents}\label{fig:exmp:motivation}
  \end{figure}
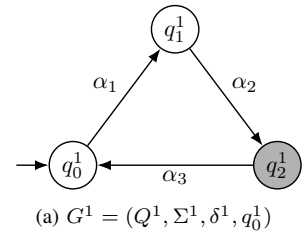
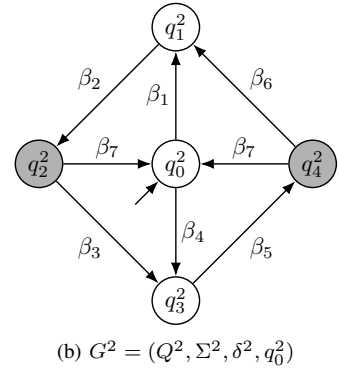
  Let us consider example agents in \cref{fig:exmp:motivation} which model a distributed database consisting of two servers: $G^1$ (Server 1) and $G^2$ (Server 2). Here, let us suppose that users store their passwords in this system, and each password is divided into two pieces. Moreover, $G^2$ consists of two databases, represented by secret states $q_2^2$ and $q_4^2$, and either is used to store pieces of passwords. On the other hand, $G^1$ has only one database indicated by a secret state $q_2^1$.

  In $G^1$, three states $q_0^1$, $q_1^1$, and $q_2^1$ represent results of specific operations by its users: $q^1_0$ (logged out), $q^1_1$ (logged in), and $q^1_2$ (accessed a database).
  Accordingly, the events in $G^1$ illustrate the following operations: $\alpha_1$ (logging into Server 1), $\alpha_2$ (accessing the database), and $\alpha_3$ (logging out from Server 1).
  In addition to the events and states, secrets and protections in $G^1$ are given by $Q^1_s = \paren{q^1_2}$ and $\Sigma^1_p = \paren{\alpha_1, \alpha_2}$.

  On the other hand, the states and events in $G^2$ are described as follows:
  \begin{itemize}
    \item $q^2_0$: logged out
    \item $q^2_1$: logged in as User 1
    \item $q^2_2$: accessed Database 1
    \item $q^2_3$: logged in as User 2
    \item $q^2_4$: accessed Database 2
    \item $\beta_1$: logging into the server as User 1
    \item $\beta_2$: accessing Database 1
    \item $\beta_3$: switching the account to User 2 from User 1
    \item $\beta_4$: logging in as User 2
    \item $\beta_5$: accessing Database 2
    \item $\beta_6$: switching the account to User 1 from User 2
    \item $\beta_7$: logging out.
  \end{itemize}
  This setup represents a situation where accessing different databases, namely Database 1 or Database 2, requires different permissions. That is, in order to access Database 1 (resp., Database 2), users must log into the server as User 1 (resp., User 2). Similar to $G^1$, $G^2$ has the following secret states and protectable events: $Q^2_s = \{q^2_2, q^2_4\}$ and $\Sigma^2_p = \{\beta_1, \beta_2, \beta_4, \beta_5\}$.

  Now with the setup of states and events, suppose that we have the following global secret and security requirement:
  \begin{align}
    S &= \{\underbrace{(q^1_2, q^2_2)}_{s_1}, \underbrace{(q^1_2, q^2_4)}_{s_2}\}, \label{eq:exmp:secret} \\
    R &= \paren{(s_1, 1), (s_2, 2)} \label{eq:exmp:req}
  \end{align}
  The set of global secrets $S$ in \cref{eq:exmp:secret} models that the user enters two passwords $s_1$ and $s_2$ (e.g., ``pass1234'' and ``pass5678'') and these two pairs share the same secret state $q^1_2$ which corresponds to the first half of two passwords (e.g., ``pass''). As seen in \cref{eq:exmp:req}, we are required to protect $s_1$ by one protection, i.e., we need to make sure that all users of this system, especially intruders, must pass at least one protected transition before reaching either $q^1_2$ or $q^2_2$. The reasoning behind this requirement is that intruders are supposed to reach both $q^1_2$ and $q^2_2$ in order to retrieve the full information of $s_1$, e.g., ``pass1234''. Also, as seen in \cref{eq:exmp:req}, we must protect $s_2$ by at least two protections.
  Moreover, consider that there are four classes of cost levels in this example, and the protectable events in $G^1$ and $G^2$ are grouped as follows:
  \begin{align*}
    \Sigma_{cl,1} &= \{\beta_1, \beta_4\} \\
    \Sigma_{cl,2} &= \{\alpha_1, \beta_2\} \\
    \Sigma_{cl,3} &= \{\alpha_2\} \\
    \Sigma_{cl,4} &= \{\beta_5\}.
  \end{align*}
  Our goal in this example is to find two control policies $\mathcal{P}^1$ and $\mathcal{P}^2$ for $G^1$ and $G^2$, respectively, so that the security requirement $R$ in \cref{eq:exmp:req} is satisfied with minimum costs.
\end{exmp}

\section{Synthesis of Protection Policies}\label{sec:synthesis}


\subsection{Solvability of DSSP}\label{sec:synthesis:solvability}

In this section, we present a necessary and sufficient condition under which a solution to \cref{prob:dssp} exists. The main idea is to verify that condition \ref{cond:dssp:secure} holds with $k = m$, namely the maximum $k$, for a given agent $G$ and security requirement $R$, i.e., the system is secure with respect to $R$ and $\Sigma_p$.

\begin{prop}\label{prop:solvability}
  Consider a set of global secrets $S$, a security requirement $R$, and a set of protectable events $\Sigma_p$. \cref{prob:dssp} is solvable if and only if the given $S$ is secure with respect to $R$ and $\Sigma_p$.
\end{prop}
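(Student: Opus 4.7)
The plan is to prove both directions of the biconditional by exploiting the monotonicity of the cost-class hierarchy $\Sigma_{p,1} \subseteq \Sigma_{p,2} \subseteq \cdots \subseteq \Sigma_{p,m} = \Sigma_p$ induced by the disjoint-union decomposition.

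First, I would record a simple monotonicity observation: whenever $k_1 \leq k_2$, we have $\Sigma_{p,k_1} \subseteq \Sigma_{p,k_2}$, so any string $s \in \Sigma^*$ containing at least $r$ occurrences of events from $\Sigma_{p,k_1}$ also contains at least $r$ occurrences from $\Sigma_{p,k_2}$. Consequently, by \cref{defn:secure}, $r$-secure reachability of a local secret state $s^i$ with respect to $G^i$ and $\Sigma_{p,k_1}$ implies $r$-secure reachability with respect to $G^i$ and $\Sigma_{p,k_2}$. Applying this tuple-by-tuple through \cref{defn:security} yields: if $S$ is secure with respect to $R$ and $\Sigma_{p,k_1}$, then $S$ is secure with respect to $R$ and $\Sigma_{p,k_2}$.

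For the (only if) direction, suppose \cref{prob:dssp} is solvable, with witnessing policies and witnessing cost level $k^\ast \in [1, m]$. Condition \ref{cond:dssp:secure} gives that $S$ is secure with respect to $R$ and $\Sigma_{p,k^\ast}$. Invoking the monotonicity observation with $k_1 = k^\ast$ and $k_2 = m$ immediately yields that $S$ is secure with respect to $R$ and $\Sigma_{p,m} = \Sigma_p$.

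For the (if) direction, suppose $S$ is secure with respect to $R$ and $\Sigma_p = \Sigma_{p,m}$. Then by \cref{defn:security}, for each $(s,r) \in R$ some $s^i$ is $r$-securely reachable with respect to $G^i$ and $\Sigma_{p,m}$, and we may take the policy $\mathcal{P}^i(q) := \Sigma_p^i$ for every $q \in Q^i$, which witnesses \ref{cond:dssp:secure} at $k = m$. Hence the set $K := \{\,k \in [1,m] : \text{\ref{cond:dssp:secure} holds for some choice of policies at cost level } k\,\}$ is nonempty and, being a subset of a finite totally ordered set, admits a minimum $k^\circ$. The policies witnessing $k^\circ \in K$ satisfy both \ref{cond:dssp:secure} and \ref{cond:dssp:minimum}, so \cref{prob:dssp} is solvable.

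The only subtlety I anticipate is notational: \cref{defn:security} couples $k$ to $\max_R k'$, so the monotonicity argument in the necessity step must be applied uniformly to the per-tuple indices $k'$ rather than only to the global $k$, but this is handled by the single set-inclusion $\Sigma_{p,k'} \subseteq \Sigma_p$ valid for every $k' \in [1, m]$. No additional machinery is required.
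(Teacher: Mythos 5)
Your proof is correct and follows essentially the same route as the paper's: the (if) direction instantiates the all-protecting policy to witness \ref{cond:dssp:secure} at level $m$ and then extracts a minimum feasible level, while your (only if) direction is simply the contrapositive of the paper's argument, stated directly via the monotonicity $\Sigma_{p,k'} \subseteq \Sigma_{p,m} = \Sigma_p$. The only (minor) difference is that you make the existence of the minimizing level explicit through the finite nonempty set $K$, which tidies up the paper's briefer remark that the all-protecting policy is a solution whenever no smaller $k$ works.
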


\begin{proof}
  (if) Suppose that the given $S$ is secure with respect to $R$ and $\Sigma_p$. This means that $S$ is secure with respect to $R$ and $\Sigma_{p,m}$ according to \cref{defn:security}. We can construct protection policies that protect all transitions labelled by the protectable events in $\Sigma_p$. Such protection policies indeed satisfy the requirement $R$. Thus, we can conclude that there exists a solution of \cref{prob:dssp}, because even if $S$ is not secure with respect to $k$ smaller than $m$, policies protecting all protectable transitions can be a solution as \ref{cond:dssp:minimum} is satisfied with $k = m$.

  (only if) By \cref{defn:secure} and \cref{defn:security}, if the set of global secrets $S$ is not secure with respect to $R$ and $\Sigma_p = \Sigma_{p,m}$, then $S$ is not secure for all $k < m$. Namely, there does not exist any solution of \cref{prob:dssp} in that case. Thus, if \cref{prob:dssp} is solvable, then $S$ is secure with respect to $R$ and $\Sigma_p$.
\end{proof}

Note that \cref{prop:solvability} only states the existence of a solution to \cref{prob:dssp}, because a protection policy with $k = m$ may not satisfy the condition \ref{cond:dssp:minimum}.
This naturally motivates us to develop an algorithm to compute a policy (if it exists) that satisfies both \ref{cond:dssp:secure} and \ref{cond:dssp:minimum}. In \cref{sec:synthesis:computation}, we present an algorithm by employing the supervisory control theory of DES.

\subsection{Policy Computation}\label{sec:synthesis:computation}

As done in \cite{Matsui_2019}, we convert the security problem \cref{prob:dssp} to a control problem so as to resort to SCT. In particular, the protection policy defined in \cref{defn:policy} can be directly derived from the \emph{control policy} of SCT by representing protection as disablement.

In SCT, a control policy $\mathcal{D}^i: Q^i \to 2^{\Sigma_p^i}$ for an agent $G^i$ represents the supervisor's control decision of which events to disable at any given state. Here, we slightly abuse the notation $\Sigma_p^i$ by taking it as the set of \emph{controllable events}, instead of protectable events.

\begin{defn}[Control Policy]\label{defn:policy:control}
  For an agent modelled as $G^i = (Q^i, \Sigma^i, \delta^i, q_0^i)$ and a nonempty subset of controllable events $\Sigma_p^i \subseteq \Sigma^i$, a control policy $\mathcal{D}^i: Q^i \to 2^{\Sigma_p^i}$ is a mapping function that assigns to each state a subset of controllable events. Specifically, letting $M^i = (X^i, \Sigma^i, \xi^i, x_0^i)$ be a supervisor that is a subautomaton of $G^i$, $\mathcal{D}^i$ is given by
  \begin{equation}\label{eq:defn:policy:control}
    \mathcal{D}^i(q) \coloneqq \begin{dcases}
      \{\sigma \in \Sigma_p^i \mid \neg\xi(q, \sigma)! \land \delta(q, \sigma)!\} & \text{if $q \in X^i$} \\
      \emptyset & \text{if $q \in Q^i \setminus X^i$}
    \end{dcases}
  \end{equation}
\end{defn}

It is straightforward to derive a protection policy $\mathcal{P}^i$ from a control policy $\mathcal{D}^i$, as they are mathematically the same mapping functions. Namely, we always have that
\[
  (\forall q \in Q^i)\ \mathcal{P}^i(q) = \mathcal{D}^i(q),
\]
thus we henceforth refer to protection policies and control policies interchangeably.

We emphasize that protecting transitions is a more appropriate representation than disablement, because administrators would not like to impose low availability on the target system, as disabling transitions prevents non-adversarial users from reaching secrets even if they are authorized to do so.

Besides the conversion from security to control, we employ the technique of relabelling transitions developed in \cite{Matsui_2019}. Roughly speaking, we ``relabel'' specific transitions originally labelled by controllable events. The main idea is that relabelled events are no longer controllable. This enables us to compute multiple supervisors for a single agent, exploiting that one supervisor computed by SCT specifies one transition in every trajectory from the initial state to designated states which are often called ``illegitimate'' states in the context of supervisory control problems. More precisely, relabelling is a procedure to construct a new automaton $G^{i\prime}$ based on $G^i = (Q^i, \Sigma^i, \delta^i, q_0^i)$ and $\mathcal{D}^i$.

First, we define $\delta_{\mathcal{D}^i}$ as a new transition function that indicates transitions specified by $\mathcal{D}^i$, i.e.,
\[
  \delta_{\mathcal{D}^i}(q, \sigma) \coloneqq \begin{dcases}
    q' & \text{if $q \in Q^i \land q' \in Q^i \land \sigma \in \mathcal{D}^i(q)$} \\
    \text{undefined} & \text{otherwise}
  \end{dcases}
\]
Next, we relabel $\delta_{\mathcal{D}^i}$ into $\delta_{\mathcal{D}^i}'$ by
\[
  \delta_{\mathcal{D}^i}'(q, \sigma') \coloneqq \delta_{\mathcal{D}^i}(q, \sigma).
\]
Observe that we relabelled $\sigma$ of a transition specified by $\mathcal{D}^i$ into $\sigma'$, adding the superscript of prime. Finally, we construct a new automaton $G^{i\prime}$ by
\begin{equation}\label{eq:relabel}
  G^{i\prime} = (Q^i, \Sigma^{i\prime}, \delta^{i\prime}, q_0^i)
\end{equation}
where
\begin{align}
  \Sigma^{i\prime} &= \Sigma_{up}^{i\prime} \disjoint \Sigma_p^i \nonumber \\
  \Sigma_{up}^{i\prime} &= \Sigma_{up}^i \disjoint \{\sigma' \mid (\exists q \in Q^i)\ \delta_{\mathcal{D}}'(q, \sigma')!\} \nonumber \\
  \delta'(q, \sigma) &= \begin{dcases}
    q' & \text{if $\delta_{\mathcal{D}^i}'(q, \sigma) = q'\ \lor$} \\
    & \qquad \paren\big(\delta(q, \sigma) = q' \land \neg\delta_{\mathcal{D}^i}(q, \sigma)!) \\
    \text{undefined} & \text{otherwise}
  \end{dcases}\label{eq:relabel:remove}
\end{align}
Essentially, we replace the transitions in $G$ specified by $\mathcal{D}$ to relabelled ones as in \cref{eq:relabel:remove}.
\begin{figure}[htp]
  \centering
  \begin{adjustbox}{scale=1}
    \begin{tikzpicture}[automata]
      \node[state] (1) {$q_0^2$};
      \node[state,above=of 1] (2) {$q_1^2$};
      \node[state,left=of 1,secret] (3) {$q_2^2$};
      \node[state,below=of 1] (4) {$q_3^2$};
      \node[state,right=of 1,secret] (5) {$q_4^2$};

      \draw (-6mm,-6mm) -- (1);
      \draw[red] (1) -- node{$\beta_1'$} (2);
      \draw (2) -- node[swap]{$\beta_2$} (3);
      \draw (3) -- node[swap]{$\beta_3$} (4);
      \draw[red] (1) -- node{$\beta_4'$} (4);
      \draw (4) -- node[swap]{$\beta_5$} (5);
      \draw (5) -- node[swap]{$\beta_6$} (2);
      \draw (3) -- node[above]{$\beta_7$} (1);
      \draw (5) -- node[above]{$\beta_7$} (1);
    \end{tikzpicture}
  \end{adjustbox}
  \caption{Relabelled automaton; red arrows indicate relabelled transitions}\label{fig:exmp:relabelled}
\end{figure}
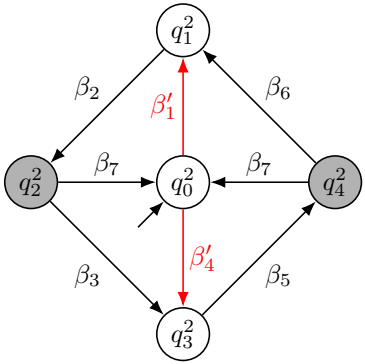
The above relabelling procedure is illustrated by \cref{fig:exmp:relabelled}, where the $G^i$ in question is that of \cref{fig:exmp:motivation}(b) and the policy $\mathcal{D}^i$ is given by
\begin{align*}
  \mathcal{D}^i(q_0^2) &= \paren{\beta_1, \beta_4} \\
  \mathcal{D}^i(q) &= \emptyset\ \text{for all $q \in q_1^2, q_2^2, q_3^2, q_4^2$}.
\end{align*}

\begin{algorithm}[tp]
  \caption{Distributed Reachability Control with Minimum Costs (DRCMC)}
  \label{alg:drcmc}
  Given agents $G^1$, $G^2$, \dots, $G^n$, and a security requirement $R$, this algorithm computes protection policies $\mathcal{P}^1$, $\mathcal{P}^2$, \dots, $\mathcal{P}^n$ for individual agents and the maximum cost level $k$ among computed policies, so that both conditions \ref{cond:dssp:secure} and \ref{cond:dssp:minimum} in \cref{prob:dssp} are satisfied.
  \begin{algorithmic}[1]
    \Require $G^1$, $G^2$, \dots, $G^n$, $R$
    \Ensure ($\mathcal{P}^1$, $\mathcal{P}^2$, \dots, $\mathcal{P}^n$, $k$) or \texttt{NO\_SOLUTION}
    \State $V = \emptyset$ \Comment{The set $V$ stores minimum cost levels determined in each iteration.}
    \State $\mathcal{P}^i(q) \coloneqq \emptyset$ for all $i \in [1,n]$ and $q \in Q^i$
    \For{$(s, r) \in R$} \label{ln:main:start}
      \State $W = \emptyset$ \Comment{The set $W$ stores pairs of a control policy and its maximum cost level.}
      \For{$s^i \in s$} \Comment{$s = (s^1, s^2, \dots, s^n)$} \label{ln:iter:mrcmc}
        \State $(\mathcal{D}^i, k^i) = \Call{MRCMC}{G^i, s^i, r}$
        \If{$\mathcal{D}^i \neq \text{Null}$} \label{ln:iter:null}
          \State $W = W \cup \paren{(\mathcal{D}^i, k^i)}$ \label{ln:iter:update}
        \EndIf
      \EndFor \label{ln:enditer:mrcmc}
      \If{$W = \emptyset$}
        \State \Return \texttt{NO\_SOLUTION} \label{ln:terminate}
      \EndIf
      \State $k_s = \max\paren*(\paren{k' \mid (\_, k') \in W})$ \label{ln:min:start}
      \State $j = 1$
      \For{$(\mathcal{D}^i, k^i) \in W$}\Comment{Find $i$ s.t. $k^i$ is minimum}
        \If{$k^i \leq k_s$}
        \State $k_s = k^i$
        \State $j = i$ \Comment{Save the candidate index} \label{ln:save}
        \EndIf
      \EndFor \label{ln:min:end}
      \State $\mathcal{P}^j(q) = \mathcal{P}^j(q) \cup \mathcal{D}^j(q)$ for all $q \in Q^j$ \label{ln:policy:protection}
      \State $V = V \cup \paren{k^j}$ \label{ln:costlevel}
    \EndFor \label{ln:main:end}
    \State $k = \max(V)$
    \State \Return ($\mathcal{P}^i$, $\mathcal{P}^2$, \dots, $\mathcal{P}^n$, $k$)
  \end{algorithmic}
\end{algorithm}

\begin{algorithm}[tp]
  \caption{Subroutines of DRCMC}
  \label{alg:sub}
  \begin{algorithmic}[1]
    \Procedure{MRCMC}{$G$, $q_s$, $r$} \Comment{This computes control policies that specify at least $r$ protectable transitions on every trajectory from the initial state to the secret state $q_s$ in $G$.}
    \State $T = \emptyset$
    \State $G_1 = G = (Q, \_, \_, \_)$
    \For{$j \in [1, r]$}
      \State $G_{K,j} = \Call{Spec}{G_j, q_s}$
      \State $(M_j, k_j) = \Call{RCMC}{G_j, G_{K,j}}$
      \If{$M_j \neq \text{Null}$}
        \State $T = T \cup \{k_j\}$
        \State Derive $\mathcal{D}_j$ from $M_j$ as in \cref{eq:defn:policy:control}
        \If{$j < r$}
          \State Form $G_{j+1}$ from $\mathcal{D}_j$ as in \cref{eq:relabel} \label{ln:mrcmc:relabel}
        \EndIf
      \Else
      \State \Return $(\text{Null}, 0)$
      \EndIf
    \EndFor
    \State $\mathcal{D}(q) \coloneqq \bigcup_{j = 1}^r \mathcal{D}_j(q)$ for all $q \in Q$
    \State \Return $\paren\big(\mathcal{D}, \max(T))$
    \EndProcedure
    \bigskip
    \Procedure{Spec}{$G$, $q_s$} \Comment{This removes from $G$ the secret state $q_s$ and transitions to and from $q_s$.}
      \State $\delta_K(q) = \begin{dcases*}
              \text{undefined} & \text{if $q = q_s \lor \delta(q) = q_s$} \\
              \delta(q) & \text{otherwise}
            \end{dcases*}$
      \State $G_K = (Q \setminus \paren{q_s}, \Sigma, \delta_K, q_0)$
      \State \Return $G_K$
    \EndProcedure
    \bigskip
    \Procedure{RCMC}{$G$, $G_K$} \Comment{This finds a nonempty supervisor (if one exists) and a maximum cost level which the supervisor needs to specify.}
      \State $K = L(G_K)$
      \For{$k = 1, 2, \dots, m$} \label{ln:rcmc:inc}
        \State $\Sigma_{p,k} = \bigdisjoint_{j = 1}^{k} \Sigma_{cl,j}$
        \State \label{ln:sub:sup} Compute a supervisor $M$ s.t. $L(M) = \supc(K)$ w.r.t. $\Sigma_{p,k}$
        \If{$M$ is nonempty}
          \State \Return $(M, k)$
        \EndIf
      \EndFor
      \State \Return $(\text{Null}, 0)$
    \EndProcedure
  \end{algorithmic}
\end{algorithm}

We are now ready to present \cref{alg:drcmc}, named \emph{Distributed Reachability Control with Minimum Costs} (DRCMC), which computes a solution of \cref{prob:dssp} if one exists.
The procedures in \cref{alg:sub} have been acquired from previous work \cite{Matsui_2019,Matsui_2021} with slight adjustments. \cref{alg:drcmc} takes distributed agents $G^i$ for $i \in [1,n]$ and their security requirement $R$, and returns protection policies of each agent and the minimum cost level if they exist, otherwise it terminates and returns \texttt{NO\_SOLUTION}.
First, \cref{alg:drcmc} initializes the protection policies and the temporary set $V$ with empty sets, explicitly indicating that no computation is invoked yet. As seen in line \ref{ln:costlevel}, the variable $V$ stores minimum cost levels determined in each iteration of the policy computation.
\cref{alg:drcmc} examines if every requirement of $R$ is satisfied by calling the procedure MRCMC. Specifically, MRCMC computes supervisors and corresponding control policies with respect to each of the given agents and their secret states. The crucial part of MRCMC is that due to the subroutine RCMC, MRCMC always returns a minimum cost level to satisfy the security requirement given as $r$.

After aggregating control policies for all secret states in the $n$-tuple $s$, DRCMC proceeds to find a policy whose cost is minimum compared to other states of $s$. If the set $W$ is empty, then \cref{alg:drcmc} returns a special value \texttt{NO\_SOLUTION} and terminates, meaning that no secret states in $s$ can be protected by at least $r$ protections. Once it finds a minimum cost and the corresponding control policy, DRCMC adds the cost level and the policy to $V$ and the initialized policy $\mathcal{P}^j$, respectively. The procedure of finding a minimum cost level and its corresponding policy is repeated until DRCMC examines all pairs in the security requirement $R$.

\begin{thm}\label{thm:drcmc}
  \cref{alg:drcmc} returns a solution of \cref{prob:dssp} if and only if \cref{prob:dssp} is solvable.
\end{thm}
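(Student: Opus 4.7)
The plan is to prove the two implications of the biconditional separately. The only-if direction is immediate: whenever \cref{alg:drcmc} terminates with output $(\mathcal{P}^1,\dots,\mathcal{P}^n,k)$, every iteration of the outer loop has added through line~\ref{ln:policy:protection} the transitions of a policy $\mathcal{D}^j$ that $r$-securely protects some component $s^j$ of the current tuple $(s,r)$; hence \ref{cond:dssp:secure} is met with $k=\max(V)$, and \cref{prob:dssp} is solvable.

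For the if direction, I assume \cref{prob:dssp} is solvable. By \cref{prop:solvability}, $S$ is secure with respect to $R$ and $\Sigma_p$, so for every $(s,r)\in R$ there exist an index $i$ and a class $k'\leq m$ such that $s^i$ is $r$-securely reachable with respect to $G^i$ and $\Sigma_{p,k'}$. I would then argue, in two layers, that the call to MRCMC on $(G^i,s^i,r)$ returns a non-Null policy together with the smallest class $k^i(s,r)$ witnessing this fact. First, RCMC iterates $k$ from $1$ upward and returns the first $k$ for which the supremal controllable sublanguage of the specification produced by Spec is nonempty; by standard results of supervisory control theory, this is exactly the smallest class admitting a supervisor that forces one protectable transition on every trajectory to $s^i$. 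Second, MRCMC invokes RCMC $r$ times and relabels after each round as in~\cite{Matsui_2019}, so that every trajectory accumulates one additional protectable transition per iteration. Consequently, for every $(s,r)\in R$ at least one of the calls in lines~\ref{ln:iter:mrcmc}--\ref{ln:enditer:mrcmc} succeeds, $W$ is nonempty, and the termination branch at line~\ref{ln:terminate} is never reached.

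It remains to verify that the returned $k=\max(V)$ satisfies \ref{cond:dssp:minimum}. I would derive a lower bound by noting that any feasible policy must, for each $(s,r)\in R$, secure some component $s^i$ with $r$ protections, so its global cost class is at least $\min_i k^i(s,r)$, where $k^i(s,r)=+\infty$ when MRCMC returns Null. Taking the maximum over $R$ yields
\[
k^\star=\max_{(s,r)\in R}\min_{i}k^i(s,r),
\]
while the selection loop at lines~\ref{ln:min:start}--\ref{ln:min:end} picks, for every $(s,r)$, an index $j$ achieving $\min_i k^i(s,r)$, so that $\max(V)=k^\star$ and the lower bound is attained. The hardest step I anticipate lies in this final optimality argument: one must rule out that some clever scheme sharing protections across different tuples of $R$ could drive the global cost below $k^\star$. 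This is resolved by the semantics of cost classes, namely that adding further events from an already-used class does not raise $k$, whereas any tuple whose chosen component requires class $k^\star$ forces $k\geq k^\star$ regardless of what is protected for the remaining tuples. Formalizing this observation, combined with the correctness of MRCMC, RCMC, and the relabelling construction inherited from~\cite{Matsui_2019,Matsui_2021}, completes the proof.
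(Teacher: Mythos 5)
Your proposal is correct and follows essentially the same route as the paper's own proof: RCMC's monotone search over cost classes gives per-component minimality, the Spec/relabelling construction makes MRCMC return a non-Null policy exactly when the component is $r$-securely reachable, emptiness of $W$ corresponds to failure of \ref{cond:dssp:secure}, and the per-tuple minimum selection yields \ref{cond:dssp:minimum}. Your explicit lower bound $k^\star=\max_{(s,r)\in R}\min_i k^i(s,r)$ is a slightly sharper formalization of the paper's closing minimality claim, but it is the same argument in substance.
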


\begin{proof}
  \cref{alg:drcmc} always terminates, since the security requirement $R$ and each tuple $s$ in $R$ are finite, and $k$ is bounded by $m$.

  The cost level $k$ returned by RCMC is minimum because it monotonically increments the cost level in line \ref{ln:rcmc:inc} of \cref{alg:sub}. Also, by the construction of the specification automaton in the procedure \textsc{Spec} and the relabelling in line \ref{ln:mrcmc:relabel} of \cref{alg:sub}, the control policy $\mathcal{D}$ computed by MRCMC specifies at least $r$ transitions in every trajectory from the initial state of the input $G$ to its secret state $q_s$. This is because each ``sub''-policy $\mathcal{D}_j$ only specifies one transition in every trajectory from the initial state to $q_s$ because of SCT, and due to relabelling the specified transitions as unprotectable ones (i.e., uncontrollable transitions in the context of SCT), the specified transitions are unique among sub-policies. Thus, MCRMC returns a policy $\mathcal{D}$, namely a non-null value, if and only if $q_s$ is $r$-securely reachable. Note that the cost level returned by MRCMC is the maximum level among the protectable events specified by the policy $\mathcal{D}$, since $\mathcal{D}$ is constructed by taking a union of each sub-policy $\mathcal{D}_j$ for all states of $G$.

  After the iteration in lines \ref{ln:iter:mrcmc}--\ref{ln:enditer:mrcmc} in \cref{alg:drcmc}, DRCMC terminates with \texttt{NO\_SOLUTION} if it does not find any candidate policy, meaning that no secret states in $s$ are $r$-securely reachable. Since the condition \ref{cond:dssp:secure} of \cref{prob:dssp} requires that there is at least one protection policy for every pair of $R$, DRCMC returns \texttt{NO\_SOLUTION} if and only if \ref{cond:dssp:secure} of \cref{prob:dssp} does not hold. If DRCMC does not terminate at line \ref{ln:terminate}, then $\mathcal{P}^j$ in line \ref{ln:policy:protection} is a protection policy with a minimum cost among the candidates due to line \ref{ln:save}. Once all pairs of $R$ are examined, DRCMC returns the protection policies constructed by taking the union of control policies. The quantity $k$ is the maximum integer in $V$ as it is the maximum cost level of protectable events specified by the discovered policies. Thus, from MRCMC and the procedure of finding a minimum cost level in lines \ref{ln:min:start}--\ref{ln:min:end}, $k$ is minimum.
  This implies that \ref{cond:dssp:minimum} of \cref{prob:dssp} holds if and only if DRCMC produces protection policies, i.e., not the special value \texttt{NO\_SOLUTION}. Therefore, \cref{alg:drcmc} returns a solution of \cref{prob:dssp} if and only if \cref{prob:dssp} is solvable.
\end{proof}

\cref{alg:drcmc} and \cref{thm:drcmc} are our main results of this paper. \cref{thm:drcmc} asserts that protection policies for individual agents which involve a minimum cost can always be computed by \cref{alg:drcmc} if they exist. In other words, if \cref{alg:drcmc} does not return policies and a cost level, then we can conclude that we are not able to satisfy the given security requirement.

\begin{rem}
  Since MRCMC in \cref{alg:sub} independently computes control policies for each agent, the system model is not constrained to have unique events, i.e., different agents can share the same events. Specifically, control policies for one agent computed by MRCMC are not affected by those for other agents.
\end{rem}

\begin{rem}
  The time complexity of DRCMC mainly comes from the length of nested iterations in \cref{alg:drcmc} and \cref{alg:sub}. Thus, \cref{alg:drcmc} is $O(\abs{R}nrm\abs{Q}^2)$ where $Q$ is the largest set of states among the given agents, $n$ is the total number of agents, $r$ is the largest number of required protections given by $R$, and $m$ is the largest cost level. The quantity $\abs{Q}^2$ is from the standard SCT in line \ref{ln:sub:sup} of \cref{alg:sub}, and we omit $\abs{W}$ since it always smaller than or equal to $n$, as $W$ is increased at line \ref{ln:iter:update} by at most one element in each iteration of the for-loop, and the for-loop is performed once for each of the $n$ elements of $s$. Therefore, we conclude that DRCMC is a polynomial-time algorithm.
\end{rem}

\section{Running Example}\label{sec:example}

Let us revisit the example system presented in \cref{exmp:motivation}. Since the system consists of two agents, our goal is to find two protection policies $\mathcal{P}^1$ and $\mathcal{P}^2$ for $G^1$ and $G^2$, respectively.
After the initialization in \cref{alg:drcmc}, we first have
\[ V = \emptyset \]
and
\begin{align*}
  \mathcal{P}^1(q) &= \emptyset\ \text{for all $q \in \paren{q^1_0, q^1_1, q^1_2}$} \\
  \mathcal{P}^2(q) &= \emptyset\ \text{for all $q \in \paren{q^2_0, q^2_1, q^2_2, q^2_3, q^2_4}$}.
\end{align*}
From \cref{eq:exmp:req}, the outermost routine of DRCMC begins with $(s_1, 1)$ where $s_1 = (q^1_2, q^2_2)$. Proceeding to line \ref{ln:iter:mrcmc}, DRCMC calls MRCMC with the parameters $G^1$, $s^1 = q^1_2$, and $r = 1$. In MRCMC, the subroutine \textsc{Spec} removes the secret state $q^1_2$ from $G_1 = G^1$, resulting in the specification automaton $G_{K,1}$ in \cref{fig:exmp:spec}.
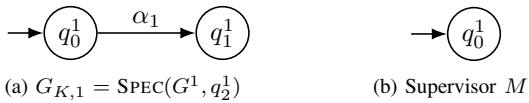
\begin{figure}[htp]
  \centering
  \begin{subfigure}{0.6\linewidth}
    \centering
    \begin{adjustbox}{scale=1}
      \begin{tikzpicture}[automata]
        \node[state,initial] (1) {$q_0^1$};
        \node[state,right=of 1] (2) {$q_1^1$};
        \draw (1) -- node{$\alpha_1$} (2);
      \end{tikzpicture}
    \end{adjustbox}
    \caption{$G_{K,1} = \textsc{Spec}(G^1, q^1_2)$}\label{fig:exmp:spec}
  \end{subfigure}
  \begin{subfigure}{0.38\linewidth}
      \centering
      \begin{adjustbox}{scale=1}
        \begin{tikzpicture}[automata]
          \node[state,initial] (1) {$q_0^1$};
        \end{tikzpicture}
      \end{adjustbox}
      \caption{Supervisor $M$}\label{fig:exmp:sup}
  \end{subfigure}
  \caption{Intermediate automata in \cref{alg:sub}}
\end{figure}

Next, with $G^1$ and $G_{K,1}$, RCMC computes a supervisor and finds a minimum value of $k$ in the subroutine. From $\Sigma_{cl,1} = \paren{\beta_1, \beta_4}$, a resulting supervisor $M$ with $k = 1$ is indeed empty, thus RCMC increments $k$ to $2$. Since $\Sigma_{p,2} = \paren{\alpha_1, \beta_1, \beta_2, \beta_4}$ that contains $\alpha_1$, RCMC returns a pair of $k = 2$ and the nonempty supervisor $M$ in \cref{fig:exmp:sup}.
Going back to MRCMC, we can derive from $M$ the following control policy $\mathcal{D}_1$:
\begin{align*}
  \mathcal{D}_1(q^1_0) &= \paren{\alpha_1} \\
  \mathcal{D}_1(q) &= \emptyset\ \text{for all $q \in \paren{q^1_1, q^1_2}$}.
\end{align*}
Since $j = r = 1$ within the first iteration, the loop terminates and MRCMC returns the control policy $\mathcal{D}^1 = \mathcal{D}_1$ and the maximum cost level $\max(T) = 2$ found in RCMC. As MCRMR returned a nonempty supervisor, we now have
\[ W = \paren{(\mathcal{D}^1, 2)}. \]

The procedure MRCMC is called twice in every iteration over $R$ (i.e., the lines between \ref{ln:main:start} and \ref{ln:main:end} of \cref{alg:drcmc}), since we have that $n = 2$ in this example. For $G^2$ and $s^2 = q^2_2$, MRCMC returns $(\mathcal{D}^2, k^2)$ where $k^2 = 1$ and
\begin{subequations}\label{eq:policy:s1}
  \begin{align}
    \mathcal{D}^2(q^2_0) &= \paren{\beta_1, \beta_4}\\
    \mathcal{D}^2(q) &= \emptyset\ \text{for all $q \in \paren{q^2_1, q^2_2, q^2_3, q^2_4}$}.
  \end{align}
\end{subequations}
Thus, we have
\[ W = \paren{(D^1, 2), (D^2, 1)}. \]
Since $W$ is not empty, DRCMC proceeds to find a policy associated with the minimum cost level. Since $k^2 = 1$ in a pair of $(D^2, 1)$ is minimum in the set of computed policies $W$, DRCMC merges $D^2$ into $\mathcal{P}^2$ and saves $k^2$ into $V$, discarding $D^1$ computed in this iteration.

For $(s_2, 2)$ where $s_2 = (q^1_2, q^2_4)$, MRCMC calls RCMC twice in total since it has $r = 2$.
The procedure MRCMC first computes the policies and cost level to protect $q^1_2$ that result in
\begin{subequations}\label{eq:policy:s2}
  \begin{align}
    \mathcal{D}^1(q^1_0) &= \paren{\alpha_1} \\
    \mathcal{D}^1(q^1_1) &= \paren{\alpha_2} \\
    \mathcal{D}^1(q^1_2) &= \emptyset
  \end{align}
\end{subequations}
and
\[ k^1 = 3. \]
For $q^2_4$, on the other hand, RCMC in the first iteration returns the control policy
\begin{align*}
  \mathcal{D}_1(q^2_0) &= \paren{\beta_1, \beta_4} \\
  \mathcal{D}_1(q) &= \emptyset\ \text{for all $q \in \paren{q^2_1, q^2_2, q^2_3, q^2_4}$}
\end{align*}
which is by chance the same as that in \cref{eq:policy:s1}. By line \ref{ln:mrcmc:relabel} of MRCMC, the transitions with $\beta_1$ and $\beta_4$ at state $q^2_0$ are relabelled to $\beta_1'$ and $\beta_4'$ which are unprotectable, resulting in the relabelled intermediate agent $G_2$ in \cref{fig:exmp:relabelled:2}.
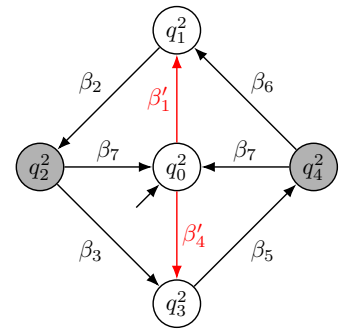
\begin{figure}[htp]
  \centering
  \begin{adjustbox}{scale=0.9}
    \begin{tikzpicture}[automata]
      \node[state] (1) {$q_0^2$};
      \node[state,above=of 1] (2) {$q_1^2$};
      \node[state,left=of 1,secret] (3) {$q_2^2$};
      \node[state,below=of 1] (4) {$q_3^2$};
      \node[state,right=of 1,secret] (5) {$q_4^2$};

      \draw (-6mm,-6mm) -- (1);
      \draw[red] (1) -- node{$\beta_1'$} (2);
      \draw (2) -- node[swap]{$\beta_2$} (3);
      \draw (3) -- node[swap]{$\beta_3$} (4);
      \draw[red] (1) -- node{$\beta_4'$} (4);
      \draw (4) -- node[swap]{$\beta_5$} (5);
      \draw (5) -- node[swap]{$\beta_6$} (2);
      \draw (3) -- node[above]{$\beta_7$} (1);
      \draw (5) -- node[above]{$\beta_7$} (1);
    \end{tikzpicture}
  \end{adjustbox}
  \caption{Relabelled intermediate agent $G_2$ (the same as that in \cref{fig:exmp:relabelled}); red arrows indicate relabelled transitions}\label{fig:exmp:relabelled:2}
\end{figure}
Although all transitions in $G^2$ are labelled by unique events, it is worth noting again that only specified transitions (not events) by the control policy are relabelled, thus one agent can have original and relabelled versions of the same event.

In the next iteration, RCMC returns the intermediate pair of control policy and cost level $(\mathcal{D}_2, k_2)$ where $k_2 = 4$ and
\begin{subequations} \label{eq:policy:s2:n}
  \begin{align}
    \mathcal{D}_2(q^2_3) &= \paren{\beta_5} \\
    \mathcal{D}_2(q) &= \emptyset\ \text{for all $q \in \paren{q^2_0, q^2_1, q^2_2, q^2_4}$}.
  \end{align}
\end{subequations}
The event $\beta_2$ at state $q^2_1$ was not specified due to the characteristics of SCT, because disabling $\beta_2$ at $q^2_1$ would not prevent the other secret state $q_4^2$ from being reached, given that the set of protectable events for the supervisor $M$ here is $\Sigma_{p,4} = \paren{\alpha_1, \alpha_2, \beta_1, \beta_2, \beta_4, \beta_5}$.
Therefore, the supervisor $M$ specifies the disablement of $\beta_5$ at state $q_3^2$ which then ensures that $q_4^2$ will not be reachable.
At this point, DRCMC finds that $\mathcal{D}^1$ in \cref{eq:policy:s2} is associated with the smaller cost level than $\mathcal{D}^2$ in \cref{eq:policy:s2:n}. Thus, DRCMC selects $\mathcal{D}^1$ to merge into $\mathcal{P}^1$ for $s_2$.

Finally, from the control policies and cost levels computed in the main loop in lines \ref{ln:main:start}--\ref{ln:main:end}, DRCMC returns a solution to \cref{prob:dssp} for \cref{exmp:motivation} that is given by:
\begin{subequations}\label{eq:solution:p1}
  \begin{align}
    \mathcal{P}^1(q^1_0) &= \paren{\alpha_1} \\
    \mathcal{P}^1(q^1_1) &= \paren{\alpha_2} \\
    \mathcal{P}^1(q^1_2) &= \emptyset,
  \end{align}
\end{subequations}
\begin{subequations}\label{eq:solution:p2}
  \begin{align}
    \mathcal{P}^2(q^2_0) &= \paren{\beta_1, \beta_4} \\
    \mathcal{P}^2(q) &= \emptyset\ \text{for all $q \in \paren{q^2_1, q^2_2, q^2_3, q^2_4}$}
  \end{align}
\end{subequations}
and
\[k = 3.\]
Interestingly, $\mathcal{P}^1$ in \cref{eq:solution:p1}, in fact, solely satisfies the security requirements of both $s_1$ and $s_2$, even though $\mathcal{P}^2$ computed by \cref{alg:drcmc} is not empty. Thus, one may claim that the solution of \cref{prob:dssp} given by $\mathcal{P}^1$ and $\mathcal{P}^2$ in \cref{eq:solution:p1} and \cref{eq:solution:p2}, respectively, is suboptimal. Indeed, that is not always the case. From the security perspective, we consider that protecting more transitions than needed (e.g., keeping $\mathcal{P}^2$ nonempty after we discovered that $\mathcal{P}^1$ solely satisfies the security requirements) is still acceptable as long as a protection policy computed by \cref{alg:drcmc} satisfies the minimality of cost level, namely the condition \ref{cond:dssp:minimum} of \cref{prob:dssp}.

\section{Conclusion}\label{sec:conclusion}

We have presented the distributed secret securing problem (DSSP) which extends previous work from monolithic systems to multiagent ones. In DSSP, we modelled the global secret as the set of tuples of secret states from local agents. Our goal was to find a protection policy such that at least one secret state of every tuple in the global secret is protected by at least the required number of protections, and the protection cost is minimum. We characterized the solvability of DSSP by deriving a necessary and sufficient condition, and showed that our algorithm developed in this paper produces a solution of DSSP when the condition holds. Finally, that algorithm was illustrated using the example system consisting of distributed databases that store passwords.

In future work, we aim to extend our representation of protections from events to transitions, namely, the situation where there are protectable transitions instead of protectable events in the system. Another possible direction is to implement the methodology of this paper in practical distributed systems.

\bibliographystyle{IEEEtran}
\bibliography{IEEEabrv,references}

\begin{thebibliography}{10}
\providecommand{\url}[1]{#1}
\csname url@rmstyle\endcsname
\providecommand{\newblock}{\relax}
\providecommand{\bibinfo}[2]{#2}
\providecommand\BIBentrySTDinterwordspacing{\spaceskip=0pt\relax}
\providecommand\BIBentryALTinterwordstretchfactor{4}
\providecommand\BIBentryALTinterwordspacing{\spaceskip=\fontdimen2\font plus
\BIBentryALTinterwordstretchfactor\fontdimen3\font minus \fontdimen4\font\relax}
\providecommand\BIBforeignlanguage[2]{{%
\expandafter\ifx\csname l@#1\endcsname\relax
\typeout{** WARNING: IEEEtran.bst: No hyphenation pattern has been}%
\typeout{** loaded for the language `#1'. Using the pattern for}%
\typeout{** the default language instead.}%
\else
\language=\csname l@#1\endcsname
\fi
#2}}
\renewcommand\BIBentryALTinterwordstretchfactor{4}

\bibitem{Scarfone_2008}
K.~Scarfone, W.~Jansen, M.~Tracy, \emph{et~al.}, ``Guide to general server security,'' \emph{NIST Special Publication}, vol. 800, no. 123, 2008.

\bibitem{Kulik_2022}
T.~Kulik, B.~Dongol, P.~G. Larsen, H.~D. Macedo, S.~Schneider, P.~W.~V. Tran-J{\o}rgensen, and J.~Woodcock, ``A survey of practical formal methods for security,'' \emph{Formal Aspects of Computing}, vol.~34, no.~1, pp. 1--39, 2022.

\bibitem{Cassandras_2021}
C.~G. Cassandras and S.~Lafortune, \emph{Introduction to Discrete Event Systems (3rd Edition)}.\hskip 1em plus 0.5em minus 0.4em\relax Springer, 2021.

\bibitem{Wonham_2019}
W.~M. Wonham and K.~Cai, \emph{Supervisory Control of Discrete-Event Systems}.\hskip 1em plus 0.5em minus 0.4em\relax Springer, 2019.

\bibitem{Bryans_2005}
J.~W. Bryans, M.~Koutny, and P.~Y. Ryan, ``Modelling opacity using {P}etri nets,'' \emph{Electronic Notes in Theoretical Computer Science}, vol. 121, pp. 101--115, 2005.

\bibitem{saboori2007notions}
A.~Saboori and C.~N. Hadjicostis, ``Notions of security and opacity in discrete event systems,'' in \emph{IEEE Conference on Decision and Control}, 2007, pp. 5056--5061.

\bibitem{lin2011opacity}
F.~Lin, ``Opacity of discrete event systems and its applications,'' \emph{Automatica}, vol.~47, no.~3, pp. 496--503, 2011.

\bibitem{Balun_2022}
J.~Balun and T.~Masopust, ``On verification of weak and strong k-step opacity for discrete-event systems,'' \emph{{IFAC}-{PapersOnLine}}, vol.~55, no.~28, pp. 108--113, 2022.

\bibitem{Labed_2022}
A.~Labed, I.~Saadaoui, N.~Wu, J.~Yu, and Z.~Li, ``Current-state opacity verification in discrete event systems using an observer net,'' \emph{Scientific Reports}, vol.~12, 2022.

\bibitem{Han_2023}
X.~Han, K.~Zhang, J.~Zhang, Z.~Li, and Z.~Chen, ``Strong current-state and initial-state opacity of discrete-event systems,'' \emph{Automatica}, vol. 148, 2023.

\bibitem{dubreil2008opacity}
J.~Dubreil, P.~Darondeau, and H.~Marchand, ``Opacity enforcing control synthesis,'' in \emph{International Workshop on Discrete Event Systems}, 2008, pp. 28--35.

\bibitem{Saboori_2012}
A.~Saboori and C.~N. Hadjicostis, ``Opacity-enforcing supervisory strategies via state estimator constructions,'' \emph{IEEE Transactions on Automatic Control}, vol.~57, no.~5, pp. 1155--1165, 2012.

\bibitem{Wakaiki_2018}
M.~Wakaiki, P.~Tabuada, and J.~P. Hespanha, ``Supervisory control of discrete-event systems under attacks,'' \emph{Dynamic Games and Applications}, vol.~9, no.~4, pp. 965--983, 2018.

\bibitem{Carvalho_2016}
L.~K. Carvalho, Y.-C. Wu, R.~Kwong, and S.~Lafortune, ``Detection and prevention of actuator enablement attacks in supervisory control systems,'' in \emph{International Workshop on Discrete Event Systems}, 2016, pp. 298--305.

\bibitem{Paoli_2012}
A.~Paoli and F.~Lin, ``Decentralized opacity of discrete event systems,'' in \emph{American Control Conference}, 2012, pp. 6083--6088.

\bibitem{Wu_2018}
B.~Wu, J.~Dai, and H.~Lin, ``Synthesis of insertion functions to enforce decentralized and joint opacity properties of discrete-event systems,'' in \emph{American Control Conference}, 2018, pp. 3026--3031.

\bibitem{Tong_2018}
Y.~Tong, K.~Cai, and A.~Giua, ``Decentralized opacity enforcement in discrete event systems using supervisory control,'' in \emph{Annual Conference of the Society of Instrument and Control Engineers of Japan}, 2018, pp. 1053--1058.

\bibitem{Tong_2019}
Y.~Tong and H.~Lan, ``Current-state opacity verification in modular discrete event systems,'' in \emph{IEEE Conference on Decision and Control}, 2019, pp. 7665--7670.

\bibitem{Matsui_2019}
S.~Matsui and K.~Cai, ``Secret securing with multiple protections and minimum costs,'' in \emph{IEEE Conference on Decision and Control}, 2019, pp. 7635--7640.

\bibitem{Matsui_2021}
S.~Matsui and K.~Cai, ``Usability aware secret protection with minimum cost,'' \emph{Nonlinear Analysis: Hybrid Systems}, vol.~43, 2021.

\bibitem{Ma_2021}
Z.~Ma and K.~Cai, ``Optimal secret protections in discrete-event systems,'' \emph{{IEEE} Transactions on Automatic Control}, vol.~67, no.~6, pp. 2816--2828, 2021.

\bibitem{Ma_2023}
Z.~Ma, J.~Jiang, and K.~Cai, ``Secret protections with costs and disruptiveness in discrete-event systems using centralities,'' \emph{IEEE Transactions on Automatic Control (Early Access)}, pp. 1--16, 2023.

\end{thebibliography}
\end{document}